\documentclass{article}

\usepackage{amsmath}
\usepackage{amsfonts}
\usepackage{amssymb}
\usepackage{amsthm}
\usepackage{enumerate}
\usepackage{fullpage}

\newtheorem{theorem}{Theorem}[section]
\newtheorem{lemma}[theorem]{Lemma}
\newtheorem{definition}[theorem]{Definition}

\newcommand{\F}{\mathbb{F}}

\newcommand{\mult}{\mathsf{mult}}
\newcommand{\fmult}{\mathsf{fmult}}

\usepackage{parskip}

\begin{document}
\title{On the CGGRT Criterion for Detecting Bipartite Perfect Matchings in NC}
\author{Swastik Kopparty\thanks{Department of Mathematics and Department of Computer Science, University of Toronto. Email: \texttt{swastik.kopparty@utoronto.ca}.   Research partially supported
by an NSERC Discovery Grant.} \and Shubhangi Saraf\thanks{Department of Mathematics and Department of Computer Science, University of Toronto, Canada. Email: \texttt{shubhangi.saraf@utoronto.ca}.   Research partially supported
by an NSERC Discovery Grant and a McLean Award.}}
\date{}
\maketitle
\thispagestyle{empty}

\begin{abstract}
The recent breakthrough work of Chatterjee, Ghosh, Gurjar, Raj and Thierauf~\cite{CGGRT26} gives the first deterministic NC algorithm for the bipartite matching problem. They show how to detect as well as find perfect matchings in bipartite graphs in NC. In this note we present an arguably simpler-to-state variation of the NC detection criterion of~\cite{CGGRT26}, with improved parameters. 


\end{abstract}
\newpage
\setcounter{page}{1}

\section{Introduction}
The problem of determining if a bipartite graph has a perfect matching is a central problem in complexity theory and algorithm design, and has been extremely well studied over several decades. Polynomial time algorithms for this problem have been known for a long time~\cite{Kuh55, FF56, HK73}.
The parallel-time complexity of this problem, however, remained open until very recently. Here we seek NC algorithms: computable in polylogarithmic parallel time with polynomial total number of operations. The first progress on this came from the algebraic algorithm of Lovasz~\cite{Lov79} (based on Edmonds~\cite{Edm67} and Tutte~\cite{Tut47}), which gave a {\em randomized} NC algorithm for detecting perfect matchings, via an NC rank computation algorithm~\cite{Csa76, IMR80, Ber84, Mulmuley87}. Ever since this work, the quest for obtaining deterministic NC algorithms for perfect matchings has been a major goal in derandomization~\cite{GPV88, FGT19}.


In their recent breakthrough work, Chatterjee, Ghosh, Gurjar, Raj and Thierauf~\cite{CGGRT26}, give the first deterministic NC algorithm for the bipartite matching problem. Not only does~\cite{CGGRT26} show how to decide in NC if a bipartite graph has a perfect matching, it also shows how to find the perfect matching in NC. We will only focus on the decision version. 

The \cite{CGGRT26} criterion for detecting perfect matchings is based on the notion of subspace designs, a pseudorandom object that was developed to analyze list decoding of algebraic codes~\cite{GX13, GK16}. More recently~\cite{BCDZ26} used subspace designs to give deterministic algorithms for some combinatorial problems about codes.

In this note, we present an arguably simpler-to-state variation of the criterion of~\cite{CGGRT26} for detecting perfect matchings, with improved parameters. The analysis of the criterion uses a strengthened form of the subspace design phenomenon for a slightly different setting (Lemma~\ref{lem:Wmult}), and closely mirrors the analysis by~\cite{CGGRT26}. 

\subsection{The Criterion}\label{sec:criterion}
Let $\F$ be a field with characteristic $> 3 n^3$. Let $\alpha_1,\ldots, \alpha_n, \beta_1, \ldots, \beta_n \in \F$ be distinct.

For $i \in [n]$, let $U_i \subseteq \F[X]$ be the $\F$-linear space:
$$ U_i =  \{ (X-\alpha_i)^k \cdot  A(X) \in \F[X] \mid \deg(A) \leq a \},$$
where $ k =  2n^3$, $ a = 2 n^2 - 2n +1$.

For $j \in [n]$, let $V_j \subseteq \F[X]$ be the $\F$-linear space:
$$ V_j = \{ (X-\beta_j)^{\ell} \cdot B(X) \in \F[X] \mid \deg(B) \leq b\},$$
where $ \ell = 2n^2$, $ b = 2n^3 - 2n + 1$.

\begin{theorem}\label{thm:main}
    Let $G$ be a bipartite graph with $n$ vertices in each part, and let $M \in \{0,1\}^{n \times n}$  be its bipartite adjacency matrix.

    Then $G$ does not have a perfect matching if and only if there exist vectors
    $$(P_1(X), \ldots, P_n(X)) \in U_1 \times U_2 \times \ldots \times U_n,$$
    $$(Q_1(X), \ldots, Q_n(X)) \in V_1 \times V_2 \times \ldots \times V_n,$$
    not both zero, such that:
    \begin{align}
    \label{eq:main}
        (P_1(X), \ldots, P_n(X)) \cdot M = (Q_1(X), \ldots, Q_n(X)).
    \end{align} 
\end{theorem}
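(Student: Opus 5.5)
The plan is to prove the two directions separately: a dimension count gives the solution when there is no perfect matching, and the multiplicity phenomenon of Lemma~\ref{lem:Wmult} rules out a nonzero solution when there is one.

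\textbf{No perfect matching $\Rightarrow$ a nonzero solution exists.} By Hall's theorem there is a set $S$ of rows with neighbourhood $N(S)$ satisfying $t := |N(S)| \le s-1$, where $s := |S|$. I will look for solutions with $P_i = 0$ for $i \notin S$ and $Q_j = 0$ for $j \notin N(S)$. For a column $j \notin N(S)$ we have $M_{ij} = 0$ for all $i \in S$, so equation~\eqref{eq:main} holds there automatically. The remaining constraints are the $t$ polynomial identities $\sum_{i\in S} P_i M_{ij} - Q_j = 0$ for $j \in N(S)$. Every polynomial involved has degree at most $D := \max(k+a,\ell+b) = 2n^3+2n^2-2n+1$, so this is a homogeneous linear system with $t(D+1)$ scalar equations.

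The number of unknowns is $s(a+1)+t(b+1)$. A nonzero solution therefore exists as soon as $s(a+1) > t(D+1-b-1) = t\cdot 2n^2$. Using $t \le s-1$, it suffices that $s(2n^2-2n+2) > (s-1)2n^2$, i.e.\ $2n^2 > s(2n-2)$. This holds because $s \le n$. If $t=0$, any nonzero $P_i \in U_i$ with $i\in S$ already works.

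\textbf{Perfect matching $\Rightarrow$ only the zero solution.} After permuting columns and relabelling the $\beta_j$ accordingly, assume the matching is the diagonal, so $M_{jj}=1$. Suppose $(P,Q)$ is a nonzero solution. If all $P_i = 0$ then $Q = PM = 0$, so we may assume some $P_i\neq 0$. Let $W = \mathrm{span}_\F\{P_1,\dots,P_n\}$, of dimension $r\ge 1$. Every polynomial in $W$ has degree at most $k+a$, and $W$ contains every $Q_j$.

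I will then apply Lemma~\ref{lem:Wmult} to $W$. I expect its shape to be a bound on the total vanishing dimension
$$\sum_{x}\dim\bigl(W\cap (X-x)^{m_x}\F[X]\bigr),$$
of order $(k+a)\, r/m$ for multiplicity thresholds $m$ much larger than $r$. The characteristic $>3n^3$ should make the Wronskian-type argument behind that lemma valid. The aim is to show that the points $\alpha_i$ (multiplicity $k=2n^3$) and $\beta_j$ (multiplicity $\ell = 2n^2$) together contribute more than this bound allows.

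Each nonzero $P_i$ gives a dimension at $\alpha_i$ with multiplicity $k$. Each nonzero $Q_j$ gives a dimension at $\beta_j$ with multiplicity $\ell$. The parameters appear tuned so that $a$ and $b$ are exactly one less than what a contradiction requires.

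\textbf{Main obstacle.} The crux is the combinatorial step producing enough independent vanishing directions from the matching. $M$ need not be invertible, so I cannot simply say that $Q$ spans $W$. My first attempt will be an induction mirroring \cite{CGGRT26}:
\begin{enumerate}[(i)]
\item Let $T$ be the support of $P$.
\item The diagonal matching pairs each $i\in T$ with column $i$, and $Q_i = P_i + \sum_{i'\neq i} P_{i'}M_{i'i}$.
\item If $Q_i = 0$, then $P_i$ lies in the span of the other $P_{i'}$. This lets me trade that direction and lower $r$ while keeping the vanishing count at $\alpha_i$.
\item If $Q_i\neq 0$, then $Q_i$ gives a high-order vanishing vector at the fresh point $\beta_i$.
\end{enumerate}
Either way, each of the $r$ independent directions should carry roughly $k+\ell$ worth of multiplicity, in a way that feeds into Lemma~\ref{lem:Wmult}. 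Matching the exact constants $a = 2n^2-2n+1$ and $b = 2n^3-2n+1$ against the lemma's bound is where I expect the real work to lie.
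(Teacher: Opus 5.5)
Your first direction is correct and is essentially the paper's dimension count. Counting the $B_j$-coefficients as unknowns against full polynomial identities is equivalent to the paper's $\ell|J|$ derivative constraints, since $k+a=\ell+b$.

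The second direction, however, is left open at exactly the two places that matter. The missing combinatorial idea is that nonzero $P_i\in U_i$ are automatically linearly independent (the paper's Lemma~\ref{lem:lin-indep}). This follows from one application of the multiplicity bound. If $W$ is spanned by $s$ independent such $P_i$, the $s$ points $\alpha_i$ already use $s(k-s+1)$ of the budget $s(k+a)$. This forces $N_W(\alpha)\le sa+s^2-1<k$ at every other point, so $W$ meets no other $U_{i'}$.

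With this, your case (iii) never occurs, and no induction or ``trading'' is needed. We get $\dim W=|T|$. For every $j\in N(T)$, the polynomial $Q_j$ is a nontrivial combination of independent polynomials, hence a nonzero element of $W$ with multiplicity at least $\ell$ at $\beta_j$. You never need the $Q_j$ to span $W$; you only need $|N(T)|\ge|T|$ such points. As written, (iii) is also confused: discarding a redundant generator $P_i$ does not lower $r=\dim W$.

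The second gap is the form of the bound. You anticipate a single-threshold estimate of Guruswami--Kopparty type, $\sum_x\dim\bigl(W\cap(X-x)^m\F[X]\bigr)\le rd/(m-r+1)$. Such an estimate cannot combine the two very different multiplicities: at $m=\ell$ it allows roughly $rn$ points, while you only produce about $2r$. What is needed is Lemma~\ref{lem:Wmult}, where each point contributes its own excess $N_W(\alpha)-s+1$. Then, with $s=|T|$,
\[
s(k-s+1)+s(\ell-s+1)\le s(k+a)\quad\Longrightarrow\quad \ell\le a+2s-2\le a+2n-2=\ell-1,
\]
a contradiction. Your heuristic that ``each direction carries about $k+\ell$'' is this inequality. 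But the per-point loss of $s-1$ is precisely what fixes $a=\ell-2n+1$, so it cannot be left as a heuristic.
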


By the works of~\cite{Csa76, IMR80, Ber84, Mulmuley87}, it follows that the criterion above can be checked in NC. More explicitly, one can show that the above criterion efficiently (in NC and with $\tilde O(n^6)$ operations) reduces to computing the rank of an $O(n^3) \times O(n^3)$ matrix over $\F$, improving on the $O(n^4) \times O(n^4)$ size matrix rank computation in the~\cite{CGGRT26} criterion from their Appendix A, and the $O(n^5) \times O(n^5)$ size matrix rank computation in the~\cite{CGGRT26} criterion from their Section 3.

The full details about the efficiency of the reduction to a rank computation appear in the appendix. Below we give a sketch of where the $O(n^3) \times O(n^3)$ dimensions of the matrix whose rank we are checking come from. 
Write $P_i(X)$ as $(X-\alpha_i)^k \cdot A_i(X)$.
Let $D$ be the $n \times n$ diagonal matrix whose $(i,i)$ entry equals $(X- \alpha_i)^k$.
Then there exist $(P_1(X), \ldots, P_n(X))$ and $(Q_1(X), \ldots, Q_n(X))$ 
as in Theorem~\ref{thm:main} satisfying \eqref{eq:main} if and only if
there exist $A_1(X), \ldots, A_n(X) \in \F[X]$, not all zero, such that for each $j \in [n]$,
the $j$'th entry of:
$$ (A_1(X), \ldots, A_n(X) ) \cdot D \cdot M$$
has its first $\ell$ derivatives at $\beta_j$ vanish.

There are $n \cdot (a+1) = O(n^3)$ variables specifying the coefficients of the $A_i(X)$, 
and there are $n \cdot \ell = O(n^3)$ linear equations which we are trying to satisfy. Thus checking the criterion of Theorem~\ref{thm:main} reduces to computing the rank of an $O(n^3) \times O(n^3)$ matrix with entries in $\F$.

\section{Proof of Theorem~\ref{thm:main}}

\subsection{Preliminaries: Subspaces and Multiplicities}

For a nonzero polynomial $f(X) \in \F[X]$, and $\alpha \in \F$, we define the {\em multiplicity of $f$ at $\alpha$}, denoted $\mult(f, \alpha)$, to be the highest power of $(X-\alpha)$ that divides $f(X)$.

We will use the Wronskian criterion for linear independence (see~\cite{BD10} for a history and quick proof).

For a collection of polynomials $f_1(X), \ldots, f_s(X) \in \F[X]$, we define the Wronskian polynomial $R(X)$ to be the determinant of the $s \times s$ matrix whose $(i,j)$ entry is
the $(j-1)$'st (formal) derivative $f_i^{(j-1)}(X)$.

\begin{lemma}\label{lem:W}
Suppose the degrees of $f_1(X), \ldots, f_s(X)$ are all less than the characteristic of $\F$.
Then $f_1(X), \ldots, f_s(X)$ are linearly independent over $\F$ if and only if their Wronskian polynomial $R(X)$ is nonzero.
\end{lemma}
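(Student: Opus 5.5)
The easy direction is immediate. If $\sum_i c_i f_i = 0$ with the $c_i \in \F$ not all zero, then differentiating gives $\sum_i c_i f_i^{(j-1)} = 0$ for every $j$. So the rows of the Wronskian matrix satisfy a nontrivial $\F$-linear (hence $\F(X)$-linear) relation, and $R(X) = 0$. This direction does not need the degree hypothesis.

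For the converse, I would assume $f_1, \ldots, f_s$ are linearly independent and show $R \neq 0$. The first step is a normal form by row reduction. Replacing the $f_i$ by an invertible $\F$-linear combination multiplies $R$ by a nonzero constant, because differentiation is linear. By Gaussian elimination on coefficient vectors I can therefore assume the $f_i$ have distinct degrees $d_1 > d_2 > \cdots > d_s$. Independence is exactly what makes this possible: the span has dimension $s$, so it contains $s$ distinct degrees. I can also scale each $f_i$ to be monic.

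Next I isolate the top-degree part of $R$. Write $f_i = X^{d_i} + (\text{lower terms})$. Then $f_i^{(j-1)} = (d_i)_{j-1} X^{d_i-j+1} + \cdots$, where $(d)_m = d(d-1)\cdots(d-m+1)$ is the falling factorial. Each term of the determinant expansion has degree at most $\sum_i d_i - \binom{s}{2}$. The coefficient of $X^{\sum_i d_i - \binom{s}{2}}$ in $R$ is $\det\big[(d_i)_{j-1}\big]_{i,j}$. Column operations turn the falling factorials into ordinary powers, since $(d)_m$ is monic of degree $m$ in $d$. So this coefficient equals the Vandermonde determinant $\prod_{i<i'} (d_{i'} - d_i)$, read in $\F$.

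The main obstacle is showing this product is nonzero in $\F$, and this is where the characteristic hypothesis enters. The $d_i$ are distinct nonnegative integers less than $\mathrm{char}\,\F$, so each difference $d_{i'}-d_i$ is a nonzero integer of absolute value less than the characteristic. Hence every factor is nonzero in $\F$, and $R \neq 0$.

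One could worry that the column reduction fails when some $d_i < j-1$. It does not: the falling factorial $(d_i)_{j-1}$ is then simply $0$, matching the zero derivative, and the polynomial identity in $d$ still holds. The Vandermonde argument therefore goes through unchanged.
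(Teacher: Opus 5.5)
Your proof is correct and self-contained. The paper gives no proof of this lemma; it only cites \cite{BD10}. Your argument is essentially the standard one found there, run at the opposite end. \cite{BD10} puts the functions in echelon form with respect to their \emph{lowest}-order terms (valuations at $0$) and shows that the lowest-order coefficient of the Wronskian is a Vandermonde in the valuations. You instead put them in echelon form by \emph{degree} and read off the top coefficient $\det\big[(d_i)_{j-1}\big] = \prod_{i<i'}(d_{i'}-d_i)$.

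For polynomials of degree below the characteristic the two versions are interchangeable. The valuation version extends to power series, and it matches the local echelon-form picture (expansion in powers of $X-\alpha$) that the paper uses in Section~\ref{sec:generalized-sd}. Your version uses the lemma's degree hypothesis exactly as stated.

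One step you use without saying it: after the change of basis, the new $f_i$ lie in the span of the original ones, so their degrees are still less than $\mathrm{char}\,\F$. That is what guarantees $0 \le d_i < \mathrm{char}\,\F$, and hence that no factor $d_{i'}-d_i$ vanishes in $\F$. State it explicitly.
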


The next lemma is an variation of the subspace design theorem of Guruswami-Kopparty~\cite{GK16}, which allows us to talk about the subspace design property for subspaces of varying codimension.

\begin{lemma}\label{lem:Wmult}
Let $W \subseteq\F[X]$ be an $s$-dimensional $\F$-linear space of polynomials with degree at most $d$, where $d$ is less than the characteristic of $\F$.
 
 For $\alpha \in \F$, define:
 $$ N_W(\alpha) = \max_{f \in W\setminus \{0\}} \mult(f, \alpha).$$

 Then:
 \begin{align}
  \label{eq:multbound}
  \sum_{\alpha \in \F } \max \bigg( N_W(\alpha) - s + 1 , 0 \bigg)  \leq  s d.
 \end{align}
\end{lemma}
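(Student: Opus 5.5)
Choose a basis $f_1,\ldots,f_s$ of $W$ and let $R(X)$ be its Wronskian. By Lemma~\ref{lem:W}, since all degrees are at most $d$ and $d$ is less than the characteristic, $R$ is a nonzero polynomial. The plan is to bound $\deg R$ from above and $\mult(R,\alpha)$ from below at every $\alpha$. Summing the lower bounds over $\alpha$ and comparing with the degree bound will give \eqref{eq:multbound}.

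\textbf{Degree bound.} The $(i,j)$ entry $f_i^{(j-1)}$ has degree at most $d-(j-1)$. Each term of the determinant therefore has degree at most $\sum_{j=1}^{s}(d-j+1) = sd - \binom{s}{2} \le sd$. Hence $\deg R \le sd$, and so $\sum_\alpha \mult(R,\alpha) \le sd$.

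\textbf{Multiplicity bound at a fixed $\alpha$.} Fix $\alpha$. The Wronskian changes only by a nonzero scalar under a change of basis, because it is multilinear and alternating in the rows. So we may choose a basis adapted to $\alpha$. Write each element of $W$ in powers of $(X-\alpha)$. By Gaussian elimination on these coefficient vectors, we pick a basis $g_1,\ldots,g_s$ whose orders of vanishing at $\alpha$, $e_1<e_2<\cdots<e_s$, are distinct. These $e_i$ are exactly the distinct multiplicities attained by elements of $W$, so $e_s = N_W(\alpha)$ and, by distinctness, $e_i \ge N_W(\alpha) - (s-i)$.

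In the Wronskian matrix of the $g_i$, the entry $g_i^{(j-1)}$ has multiplicity at least $e_i-(j-1)$ at $\alpha$ when this is positive. The derivative lowers multiplicity by exactly one when $e_i$ is below the characteristic, and $e_i \le d$ is. Every term of the determinant is a product over the rows $i$, using each column once. Its multiplicity at $\alpha$ is therefore at least
\[
\sum_i \max(e_i-(j_i-1),0)\;\ge\;\sum_i \bigl(e_i-(j_i-1)\bigr) \;=\; \sum_i e_i - \binom{s}{2}.
\]
So $\mult(R,\alpha) \ge \sum_i e_i - \binom{s}{2}$, and the same holds with the truncations, giving $\mult(R,\alpha)\ge \sum_i \max(e_i - s+1,0)$ after a cruder estimate.

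We need $\mult(R,\alpha) \ge \max(N_W(\alpha)-s+1,0)$. This is immediate from the $i=s$ term alone: the term contributes $\max(e_s - (j_s-1),0) \ge e_s - s + 1$, and all other terms are nonnegative. Hence each term of the determinant is divisible by $(X-\alpha)^{\max(N_W(\alpha)-s+1,0)}$.

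Summing over $\alpha$ and using $R\neq 0$ gives
\[
\sum_\alpha \max(N_W(\alpha)-s+1,0) \le \sum_\alpha \mult(R,\alpha) \le \deg R \le sd.
\]

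\textbf{Main care point.} The one step needing attention is justifying that $\mult(g^{(j)},\alpha)\ge \mult(g,\alpha)-j$ holds in positive characteristic. This actually holds for formal derivatives over any field without restriction, since differentiating $(X-\alpha)^e h$ keeps a factor $(X-\alpha)^{e-1}$. The characteristic hypothesis is used only in Lemma~\ref{lem:W}, to ensure $R\ne 0$.
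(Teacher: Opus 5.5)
Your proof is correct and is essentially the paper's argument: $R\neq 0$ by Lemma~\ref{lem:W}, $\deg R\le sd$, and divisibility of $R$ by $(X-\alpha)^{\max(N_W(\alpha)-s+1,0)}$ read off from the Wronskian row of a basis element of maximal multiplicity (your $g_s$). One slip to fix: $e_1<\cdots<e_s$ gives $e_i\le N_W(\alpha)-(s-i)$ (and $e_i\ge i-1$), not $e_i\ge N_W(\alpha)-(s-i)$, but you never use that inequality, so the argument is unaffected.
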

Note: it turns out that $N_W(\alpha) - s + 1$ is always nonnegative (see Section~\ref{sec:generalized-sd}).
\begin{proof}
Let $R(X)$ be the determinant of the $s \times s$ Wronskian of a basis for $W$.

Then we have:
\begin{itemize}
\item $R(X)$ is nonzero, by Lemma~\ref{lem:W}.
 \item $\deg(R) \leq sd$.
 \item For any $\alpha\in \F$, $\mult(R, \alpha) \geq N_W(\alpha)-s + 1$. 
 To see this, let $f(X) \in W$ be such that $\mult(f, \alpha) = N_W(\alpha)$. By an $\F$-linear change of basis for $W$ (which only changes the Wronskian determinant by a nonzero $\F$-scalar), we can include $f$ in our basis for $W$. Expanding the determinant, and using the fact that the $j$'th derivative $f^{(j)}(X)$ is divisible by $(X-\alpha)^{N_W(\alpha) - j}$, 
 we get that one row of the Wronskian matrix, and hence the determinant, is divisible by $(X-\alpha)^{N_W(\alpha)-s+1}$, as desired.
\end{itemize}
This gives us the lemma.
\end{proof}

In the applications of Lemma~\ref{lem:Wmult}, $W$ will always be a linear space which is spanned by $s$ polynomials known to have high multiplicity at $s$ pre-chosen points. This will account for a large contribution to the LHS, and thus give a strong upper bound on $N_W(\alpha)$ for all  $\alpha$ not in the pre-chosen collection.

\subsection{Proof of Theorem~\ref{thm:main}}

We will index the vertices of both the left part and the right part of the graph $G$ by $[n]$. 
As a convention, the $n \times n$ bipartite adjacency matrix of $G$ has a $1$ in entry $(i,j)$
if and only if vertex $i$ on the left is adjacent to vertex $j$ on the right.

For a subset $I \subseteq [n]$ (viewed as left vertices),
we let $N(I) \subseteq [n]$ denote its neighbourhood (viewed as right vertices). The main tool for arguing about perfect matchings is Hall's theorem: $G$ has a perfect matching if and only if for all $I \subseteq [n]$, $|N(I)| \geq |I|$.

The first step in the proof of Theorem~\ref{thm:main} is a linear independence property of the $U_i$.

\begin{lemma}\label{lem:lin-indep}
 Let $I \subseteq [n]$.
 Suppose $(P_i )_{i \in I}$ are all nonzero with $P_i \in U_i$, then they are linearly independent over $\F$.
\end{lemma}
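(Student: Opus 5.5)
Assume for contradiction that the $P_i$, $i\in I$, are nonzero but linearly dependent. Discarding the indices with zero coefficient in a dependence relation, we get a nonempty $I'\subseteq I$ with $|I'| = s \geq 2$ and nonzero scalars $c_i$ such that $\sum_{i\in I'} c_i P_i = 0$, while every proper subfamily of $\{P_i\}_{i \in I'}$ is linearly independent. Let $W$ be the span of $\{P_i\}_{i\in I'}$. Then $\dim W = s-1 \le n-1$, and every element of $W$ has degree at most $d = k+a$. Note that $d = 2n^3 + 2n^2 - 2n + 1 < 3n^3$, so $d$ is less than the characteristic and Lemma~\ref{lem:Wmult} applies to $W$.

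The key step is to lower bound $N_W(\alpha_i)$ for each $i\in I'$. Since each $P_i$ is nonzero and lies in $W$, we get $N_W(\alpha_i)\ge \mult(P_i,\alpha_i) \ge k$. The points $\alpha_i$ for $i \in I'$ are distinct, so each contributes separately to the left side of \eqref{eq:multbound}. Dropping all the other (nonnegative) terms, we obtain
$$ s\,(k - (s-1) + 1) \;\le\; \sum_{i\in I'} \max\big(N_W(\alpha_i) - (s-1) + 1, 0\big) \;\le\; (s-1)(k+a). $$
The left side is at least $s(k-n+2)$ because $s \le n$.

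It remains to check that this inequality is violated. Expanding, we need $sk - s(s-2) > (s-1)k + (s-1)a$, which simplifies to $k > (s-1)a + s(s-2)$. The right side is at most $(n-1)(2n^2-2n+1) + n^2 < 2n^3 = k$, which is a contradiction. Hence the $P_i$ are linearly independent.

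The only delicate point is applying Lemma~\ref{lem:Wmult} to the span of a minimally dependent subfamily, so that the dimension is exactly $s-1$ while there are $s$ distinct high-multiplicity points. Beyond that, the proof reduces to checking the parameter inequality, which holds comfortably since $k = 2n^3$ dominates $na \approx 2n^3 - 2n^2$.
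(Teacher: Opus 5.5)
Your proof is correct and is essentially the paper's argument. The paper applies Lemma~\ref{lem:Wmult} to the span $W$ of a maximal independent subfamily $(P_i)_{i\in I_0}$ and gets $N_W(\alpha)\le sa+s^2-1<k$ at every $\alpha\notin\{\alpha_i : i\in I_0\}$. That is the same count as yours (more distinct points of multiplicity $\ge k$ than the dimension of the span allows), with the dimension shifted by one.

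There are two cosmetic slips. First, discarding zero coefficients does not by itself make every proper subfamily independent, so you should simply choose an inclusion-minimal dependent subfamily. Second, $d<3n^3$ fails at $n=1$, but $d\le 3n^3<$ the characteristic of $\F$ is all that Lemma~\ref{lem:Wmult} needs.
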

\begin{proof}
Let $I_0 \subseteq I$ be a minimal 
subset such that $(P_i)_{i \in I_0}$ spans
$(P_i)_{i \in I}$. 

Note that $(P_i)_{i \in I_0}$ are all linearly independent.
We will show that they do not span any nonzero element
from any $U_i$ with $i \not\in I_0$.

Let $W = \mathsf{span}( P_i(X) : i \in I_0 )$.
For $s = |I_0|$ and $d = k + a$, this satisfies the hypothesis of Lemma~\ref{lem:Wmult}.

Now observe that $N_W(\alpha_i) \geq k$ for each $i \in I_0$.
This already contributes $s \cdot (k - s + 1)$ to the LHS of 
~\eqref{eq:multbound}.
On the other hand, the RHS is only $sd = s \cdot (k + a)$.

Thus for any $\alpha \in \F \setminus \{ \alpha_i \mid i \in I_0 \}$, we have
$$N_W(\alpha) \leq  \left( s \cdot (k+a) - s \cdot (k-s+1) \right) + s - 1 = sa + s^2 - 1 \leq na  + n^2 -1 < k.$$

Thus $W$ does not intersect $U_i$ with $i \not\in I_0$, as desired.
\end{proof}

\paragraph{No perfect matching $\Rightarrow$ a solution to~\eqref{eq:main}:}

Suppose $G$ does not have a perfect matching.
We want to find a nontrivial solution to~\eqref{eq:main} with
$(P_1, \ldots, P_n) \in \prod U_i$ and $(Q_1, \ldots, Q_n) \in \prod V_j$.

By Hall's theorem, there exists $I \subseteq [n]$ 
such that the neighborhood $J = N(I)$ has $|J| < |I|$.
We will choose $(P_1, \ldots, P_n) \in \prod_{i \in [n]} U_i$  and $(Q_1, \ldots, Q_n) \in \prod_{j \in [n]}V_j$ satisfying $P_i = 0$ for $i \not\in I$ and $Q_j = 0$ for $j \not\in J$. Then only the minor $M_{I,J}$ of $M$ becomes relevant, and we seek a solution to:
$$(P_i)_{i \in I}  \cdot M_{I,J}  =  (Q_j)_{j \in J}.$$
We show that a nonzero solution exists by dimension counting.

The space of $(P_i)_{i \in I}$ with each $P_i \in U_i$ 
has dimension at least $a|I|$. The constraint that $(P_i)_{i \in I} \cdot M_{I,J} \in \prod_{j \in J} V_j$ imposes $\ell |J|$ constraints.

However, we have:
$$\frac{a |I|}{\ell |J|} \geq \frac{a}{\ell} \cdot \frac{n}{n-1} = \frac{2n^2 - 2n + 1}{2n^2} \cdot \frac{n}{n-1} > 1.$$
Thus there exists a nonzero solution $(P_i)_{i \in I}$, as desired.

\paragraph{A solution to~\eqref{eq:main} $\Rightarrow$ no perfect matching: }

Suppose there exists nonzero $(P_1, \ldots, P_n) \in \prod U_i$ and $(Q_1, \ldots, Q_n) \in \prod V_j$ 
with:
$$ (P_1, \ldots, P_n) \cdot M  = (Q_1, \ldots, Q_n).$$

Suppose further that $G$ has a perfect matching. 
We will derive a contradiction from this.

Let $I \subseteq [n]$ be the set of $i$ for which $P_i \neq 0$. Let $J \subseteq [n]$ be the neighborhood $N(I)$.
By Hall's theorem, $|J| \geq |I|$.

If we focus on the minor $M_{I,J}$ of $M$, we get that
$$(P_i)_{i \in I}  \cdot M_{I,J}  =  (Q_j)_{j \in J}.$$

Let
$$ W = \mathsf{span}( P_i  \mid i \in I ).$$
By Lemma~\ref{lem:lin-indep}, we have $\dim(W) = |I|$.
We will apply Lemma~\ref{lem:Wmult} to 
$W$, with $s = \dim(W) = |I|$ and $d = k + a$.

Observe that each $Q_j$ with $j \in J$ is a nonzero linear
combination of the $(P_i)_{i \in I}$, and thus by Lemma~\ref{lem:lin-indep}, lies in $W \setminus \{0\}$.
Since $Q_j$ also lies in $V_j$, then
we have that $N_W(\beta_j) \geq \ell$.

Then we can lower bound the LHS of equation~\eqref{eq:multbound},
summed at $\alpha \in \{ \alpha_i \mid i \in I\} \cup \{ \beta_j \mid j \in J \}$, by:
\begin{align*}
 LHS &\geq \sum_{i \in I} (N_W(\alpha_i) - s +1) + \sum_{j \in J}( N_W(\beta_j) - s + 1 )\\
 & \geq |I| \cdot (k - s + 1) + |J| \cdot (\ell - s + 1) \\
 & \geq s \cdot (k - s + 1) + s \cdot (\ell - s + 1) \\
 &\geq s \cdot (k + \ell - 2s + 2 ).
\end{align*}

On the other hand, we know that $LHS \leq sd = s(k+a)$, and so:
$$  s \cdot (k + \ell - 2s + 2 ) \leq s \cdot (k+a),$$
which tells us that 
$$\ell \leq a + 2s - 2 \leq a + 2n - 2 = \ell -1,$$
a contradiction.

\subsection{Relationship to the proof in~\cite{CGGRT26}}

The above presentation largely follows the proof of Theorem 3.1 in~\cite{CGGRT26},
with some small changes. We summarize the changes below for readers who are familiar with~\cite{CGGRT26}.
\begin{itemize}
    \item We instantiate everything with the Wronskian instead of the folded Wronskian. This is purely cosmetic, with the only advantage being that we can present things in terms of the well-known term ``multiplicity" instead of the less-well-known ``folded multiplicity":
    $$ \fmult(f, \alpha) = \mbox{the largest $i \geq 0$ s.t. } f(\alpha) = f(\alpha \gamma) = \ldots = f(\alpha \gamma^{i-1}) = 0.$$
    In the folded analogue of Theorem 1.1, we would choose $\gamma \in \F$ with multiplicative order $> 3n^3$,
    choose $\alpha_1, \ldots, \alpha_n, \beta_1, \ldots, \beta_n$ such that the ratio of every two of them is not in the set
    $\{1, \gamma, \ldots, \gamma^{3n^3} \}$, and define:
    $$U_i = \{ P(X) \in \F[X] \mid \deg(P) \leq k+a, \fmult(P, \alpha_i) \geq k \}.$$
    $$V_j = \{ Q(X) \in \F[X] \mid \deg(Q) \leq \ell + b, \fmult(Q, \beta_j) \geq \ell \}.$$
    
    \item In the matrix $\widehat{A}$ appearing in Theorem 3.1 in~\cite{CGGRT26}, there are $n-1$ copies of the identity matrix, which effectively creates high (folded) multiplicity vanishing constraints at $n-1$ $\F$-points per left vertex of the graph. In Theorem~\ref{thm:main}, we instead create a single $\F$-point, $\alpha_i$, per left vertex of the graph, and impose a very high multiplicity vanishing constraint at that point.

    To handle this, we have to prove a different subspace design theorem, Lemma~\ref{lem:Wmult}, to account for the varying multiplicities.

    \item The reduction in the number of $\F$-points where vanishing constraints are imposed (from $n \cdot (n-1)$ to $n$) has a quantitative advantage because of the way subspace design theorems accumulate some loss at each new point. The condition in Theorem~\ref{thm:main} amounts to computing the rank of an $O(n^3) \times O(n^3)$ matrix, which improves on the $O(n^5) \times O(n^5)$ matrix in Theorem 3.1 of~\cite{CGGRT26}, and the $O(n^4) \times O(n^4)$ matrix in the alternate algorithm in Appendix A of~\cite{CGGRT26}.

    Compared to Theorem 3.1 of~\cite{CGGRT26}, there are two factors of $n$ improved in the dimension of the matrix. One factor of $n$ is explained by the reduction in the number of $\F$-points. The other factor of $n$ is more benign, and is explained by the fact that elements of $U_i$ can be parametrized by polynomials of degree at most $a$, by multiplying them by $(X-\alpha_i)^k$; this is more efficient than taking all polynomials of degree $k+a$, and ensuring that they lie in $U_i$ by imposing $k$ derivative vanishing conditions at $\alpha_i$.
\end{itemize}

\section{A generalized subspace design theorem}\label{sec:generalized-sd}

In this section we give a generalized subspace design theorem for subspaces of varying codimension, generalizing both  Lemma~\ref{lem:Wmult} and the subspace design theorem from Guruswami-Kopparty~\cite{GK16}.
It is based on a finer study of the local behaviour of a subspace of polynomials around a point.
(Note that this section is not needed for the proof of Theorem~\ref{thm:main}.)

Let $W \subseteq \F[X]$ be an $\F$-linear space of dimension $s$.
Let $\alpha \in \F$.

We now define some parameters capturing the local behaviour of $W$ around $\alpha$,
analogous to successive minima in lattices. These can be best understood by writing
all polynomials in $W$ in the form $\sum_{j \geq 0} a_j (X-\alpha)^j$, and bringing the
vectors $(a_j)_{j \geq 0}$ into echelon form.

\begin{definition}
    For $1 \leq i \leq s$,
    define
    $$\lambda_i(W, \alpha) =  \max_{\substack{W' \subseteq W \\ \dim(W')=i} } \min_{f \in W' \setminus \{0\}} \mult(f, \alpha).$$
\end{definition}
Note that $\lambda_1(W,\alpha)$ is exactly the same as $N_W(\alpha)$ from Lemma~\ref{lem:Wmult}.

The following observations follow from the definition and elementary linear algebra:
\begin{itemize}
\item $\lambda_1(W,\alpha) > \lambda_2(W,\alpha) > \ldots > \lambda_s(W,\alpha) \geq 0$.
In particular, $\lambda_i(W,\alpha) \geq s - i$, and $N_W(\alpha) \geq s-1$.
    \item the space $W$ has a basis  $f_1, \ldots, f_s$ of the form:
$$ f_i(X) = (X- \alpha)^{\lambda_i(W, \alpha)} \cdot \hat{f}_i(X),$$
where $\hat{f}_i(\alpha) \neq 0$.
\end{itemize}

Taking the above basis for $W$, and using the fact that the $j$'th derivative $f_i^{(j)}(X)$ is divisible by $(X-\alpha)^{ \max(\lambda_i(W,\alpha)- j,\  0)}$, we get a natural lower bound on the vanishing multiplicity of the Wronskian $R(X)$ of $f_1, \ldots, f_s$ at $\alpha$:
$$ \mult(R, \alpha) \geq  \sum_{i=1}^s \max \bigg( \lambda_i(W,\alpha)- s + 1    , 0\bigg).$$

This gives us the following subspace design theorem, generalizing both the statement from~\cite{GK16} and Lemma~\ref{lem:Wmult}.

\begin{theorem}

 Let $W \subseteq\F[X]$ be an $s$-dimensional $\F$-linear space of polynomials with degree at most $d$, where $d$ is at most the characteristic of $\F$.

    Then:
    $$ \sum_{\alpha \in \F} \sum_{i=1}^s \max \bigg( \lambda_i(W,\alpha)- s + 1    , 0\bigg) \leq sd.$$
\end{theorem}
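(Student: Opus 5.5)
The plan is to run the Wronskian argument of Lemma~\ref{lem:Wmult}, using the finer local lower bound established just above in place of the single-row estimate. Fix any basis $g_1,\ldots,g_s$ of $W$ and let $R(X)$ be its Wronskian determinant. Since all degrees are at most $d$ and $d$ is below the characteristic, Lemma~\ref{lem:W} gives $R \neq 0$. Each entry $g_i^{(j-1)}$ has degree at most $d$, so every term of the determinant expansion has degree at most $sd$, and therefore $\deg(R) \le sd$. A nonzero polynomial satisfies $\sum_{\alpha\in\F}\mult(R,\alpha)\le\deg(R)$, because the factors $(X-\alpha)^{\mult(R,\alpha)}$ for distinct $\alpha$ are pairwise coprime and all divide $R$. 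It thus suffices to show, for each $\alpha$, that
$$\mult(R,\alpha)\ \ge\ \sum_{i=1}^s \max\bigl(\lambda_i(W,\alpha)-s+1,\,0\bigr).$$

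For a fixed $\alpha$, the first step is to change basis to the echelon basis $f_1,\ldots,f_s$ with $f_i=(X-\alpha)^{\lambda_i(W,\alpha)}\hat f_i$. The Wronskian is multilinear in its rows, so an invertible $\F$-linear change of basis multiplies $R$ by a nonzero scalar, namely the determinant of the change-of-basis matrix. Hence $\mult(R,\alpha)$ does not depend on the basis chosen, and a single fixed $R$ can be used for all $\alpha$ at once, even though the convenient basis changes with $\alpha$.

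With the echelon basis in hand, row $i$ consists of $f_i^{(j)}$ for $0\le j\le s-1$. Each such derivative is divisible by $(X-\alpha)^{\max(\lambda_i-j,0)}$, and hence by $(X-\alpha)^{\max(\lambda_i-s+1,0)}$. Factoring this power out of row $i$, for every $i$, shows that $(X-\alpha)^{\sum_i\max(\lambda_i-s+1,0)}$ divides $R$.

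The main things to verify carefully are two. First, the echelon basis must exist with these exact multiplicities. This follows by writing elements of $W$ in powers of $(X-\alpha)$ and performing Gaussian elimination on the lowest-order coefficients. Elimination yields a basis with distinct orders $\mu_1>\cdots>\mu_s$, and one then checks that these orders coincide with the max-min quantities $\lambda_i$. Second, the statement allows $d$ to equal the characteristic $p$, whereas Lemma~\ref{lem:W} requires degrees strictly less than $p$. In the boundary case I would either note that the intended hypothesis is strict, or handle separately the possibility of an $X^p$ term, whose formal derivative vanishes. I expect this boundary case to be the only real obstacle. Apart from it, the proof is a direct sum of the local bounds over $\alpha$.
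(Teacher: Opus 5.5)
Your proposal is correct and is essentially the paper's argument: a single nonzero Wronskian $R$ of degree at most $sd$, basis-invariance of $\mult(R,\alpha)$, and at each $\alpha$ the echelon basis $f_i=(X-\alpha)^{\lambda_i(W,\alpha)}\hat f_i$, from whose $i$th row one factors out $(X-\alpha)^{\max(\lambda_i(W,\alpha)-s+1,0)}$. On the boundary case, only your first option works, because the statement is false when $d$ equals the characteristic $p$: for $W=\mathsf{span}(1,X^p)$ over $\F_p$ with $p\ge 5$, the identity $X^p-\alpha^p=(X-\alpha)^p$ gives $\lambda_1(W,\alpha)=p$ for every $\alpha$, so the left side is $p(p-1)>2p=sd$; the hypothesis must therefore be read as strict, as in Lemma~\ref{lem:Wmult} and as required by Lemma~\ref{lem:W}, which the paper's argument relies on.
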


If we sum only over $\alpha, i$ where $\lambda_i(W, \alpha) \geq t$, we get the subspace design theorem from~\cite[Theorem 17]{GK16}.
If we sum only the $i=1$ terms, we get Lemma~\ref{lem:Wmult}.

\bibliographystyle{alpha}
\bibliography{refs.bib}

\appendix
\section{The $\F$-matrix underlying Theorem~\ref{thm:main}}

We now give some details of how the $O(n^3) \times O(n^3)$ matrix over $\F$, whose rank we are trying to compute, can be constructed  in NC and with $\tilde O(n^6)$ operations.

Write $P_i(X)$ as $A_i(X) \cdot (X-\alpha_i)^k$.
Represent $A_i(X)$ in terms of variables $A_{i,r}$ for $0 \leq r \leq a$:
$$A_i(X) = \sum_{r =0 }^{a} A_{i,r} X^r.$$

Let $D_t : \F[X] \to \F[X]$ denote the ($\F$-linear) operator of taking the $t$'th (Hasse) derivative: $ D_t( X^r) =  {r \choose t} X^{r-t}$.

Then the criterion of Theorem~\ref{thm:main} can be expressed as asking for the existence of $(A_{i,r})_{i \in [n], 0 \leq r \leq a}$, not all zero, such that for all $j \in [n], 0 \leq t < \ell$:
\begin{align*}
     D_t \left( \sum_{i : M_{ij} = 1 }   A_i(X)  \cdot (X-\alpha_i)^k  \right) (\beta_j)  = 0.
\end{align*} 
The LHS simplifies to:
\begin{align*}
&\sum_{i : M_{ij} = 1 }  D_t ( ( A_i(X)  \cdot (X-\alpha_i)^k ) ) (\beta_j)  \\
     &= \sum_{i : M_{ij} = 1 }   \sum_{t_1 + t_2 = t} D_{t_1}(A_i(X)) (\beta_j)  \cdot D_{t_2}((X-\alpha_i)^k ) ) (\beta_j) \\
     &= \sum_{i : M_{ij} = 1 }    \sum_{t_1 + t_2 = t} \left(\sum_{r \leq a}  A_{i,r}\cdot  {r \choose t_1} \cdot \beta_j^{r-t_1} \right) \cdot \left( {k \choose t_2} \cdot (\beta_j - \alpha_i)^{k-t_2}\right) \\
     &= \sum_{i : M_{ij} = 1 }  \sum_{r \leq a} A_{i,r}   \left( \sum_{t_1 + t_2 = t}  {r \choose t_1}\cdot  {k \choose t_2} \cdot \beta_j^{r-t_1} \cdot  (\beta_j - \alpha_i)^{k-t_2}\right) \\
     &= \sum_{i : M_{ij} = 1 }  \sum_{r \leq a} A_{i,r}  \cdot \Gamma_{i,r,j,t} \\
\end{align*} 
where $\Gamma_{i,r, j, t} \in \F$ is defined to be the following constant (independent of the input graph):
$$ \Gamma_{i,r,j,t} =   \sum_{t_1 + t_2 = t}  {r \choose t_1}\cdot  {k \choose t_2} \cdot \beta_j^{r-t_1} \cdot  (\beta_j - \alpha_i)^{k-t_2}.$$


Thus the criterion of Theorem~\ref{thm:main} reduces to computing the rank of a matrix,
with rows indexed by $[n] \times \{r : 0 \leq r \leq a\}$ and columns indexed by $[n] \times \{t : 0 \leq t < \ell\}$, whose $((i,r) , (j, t))$  entry equals:
$$ M_{ij} \cdot \Gamma_{i,r,j,t}.$$

Finally, notice that $\Gamma_{i,r,j,t}$ equals  $1/{ a \choose r }$ times the coefficient of $Y^r Z^t$ in:
$$ (1 + Y \cdot (Z + \beta_j) )^a \cdot ( Z + (\beta_j - \alpha_i) )^k.$$
Even though the $Z$-degree of this polynomial is $\Omega(n^3)$,  only monomials with $Z$ degree less than $\ell = O(n^2)$ are relevant.

Thus for fixed $i,j \in [n]$, all the relevant $\Gamma_{i,r,j,t}$ coefficients can be computed in NC with $\tilde{O}(n^4)$ operations by computing the above polynomial mod $Z^\ell$ using repeated squaring and the FFT based polynomial multiplication (which is in NC); see, for example, \cite{BCS97}. This can be done in parallel for all $i,j$, and thus the whole matrix can be computed in NC with $\tilde{O}(n^6)$ operations.
\end{document}